  \providecommand\BibTeX{{%
    \normalfont B\kern-0.5em{\scshape i\kern-0.25em b}\kern-0.8em\TeX}}}
\begin{document}


\title[Tiering as Submodular Optimization]{Tiering as a Stochastic Submodular Optimization Problem}




\author{Hyokun Yun}
\affiliation{%
  \institution{Amazon.com}
  \streetaddress{550 Terry Ave N.}
  \city{Seattle}
  \state{Washington}
  \postcode{98109}
}
\email{yunhyoku@amazon.com}

\author{Michael Froh}
\affiliation{%
  \institution{Amazon.com}
  \streetaddress{550 Terry Ave N.}
  \city{Seattle}
  \state{Washington}
  \postcode{98109}
}
\email{froh@amazon.com}

\author{Roshan Makhijani}
\affiliation{%
  \institution{Amazon.com}
  \streetaddress{550 Terry Ave N.}
  \city{Seattle}
  \state{Washington}
  \postcode{98109}
}
\email{makhijan@amazon.com}

\author{Brian Luc}
\affiliation{%
  \institution{Amazon.com}
  \streetaddress{550 Terry Ave N.}
  \city{Seattle}
  \state{Washington}
  \postcode{98109}
}
\email{briluc@amazon.com}

\author{Alex Smola}
\affiliation{%
  \institution{Amazon.com}
  \streetaddress{550 Terry Ave N.}
  \city{Seattle}
  \state{Washington}
  \postcode{98109}
}
\email{smola@amazon.com}

\author{Trishul Chilimbi}
\affiliation{%
  \institution{Amazon.com}
  \streetaddress{550 Terry Ave N.}
  \city{Seattle}
  \state{Washington}
  \postcode{98109}
}
\email{trishulc@amazon.com}








\renewcommand{\shortauthors}{Yun et al.}

\begin{abstract}
  Tiering is an essential technique for building large-scale
  information retrieval systems.  While the selection of documents for
  high priority tiers critically impacts the efficiency of tiering,
  past work focuses on optimizing it with respect to a static set of
  queries in the history, and generalizes poorly to the future
  traffic. Instead, we formulate the optimal tiering as a stochastic
  optimization problem, and follow the methodology of regularized
  empirical risk minimization to maximize the \emph{generalization
    performance} of the system. We also show that the optimization
  problem can be cast as a stochastic submodular optimization problem
  with a submodular knapsack constraint, and we develop efficient
  optimization algorithms by leveraging this connection.
\end{abstract}

\begin{CCSXML}
<ccs2012>
<concept>
<concept_id>10002951.10003317.10003365.10003366</concept_id>
<concept_desc>Information systems~Search engine indexing</concept_desc>
<concept_significance>500</concept_significance>
</concept>
</ccs2012>
\end{CCSXML}

\ccsdesc[500]{Information systems~Search engine indexing}



\keywords{information retrieval, tiering, submodular optimization, submodular knapsack}


\maketitle

\section{Introduction}

Tiering \citep{risvik2003multi} is a classic method for scaling
information retrieval systems to large corpora by restricting searches
on a subset of documents.
In this paper, we focus on two-tier systems because of their popularity
and simplicity, but ideas in this paper can be applied to more than
two tiers by iteratively splitting a tier into two. 
In two-tier systems, Tier 1 indexes and serves only a selected set of
documents in the corpus, whereas Tier 2 covers all documents. For
every incoming query, a query classifier checks the \emph{eligibility}
of the query to Tier 1. If the query is eligible, it is handled by
Tier 1, which is more efficient than Tier 2 in serving the query
because Tier 1 indexes much fewer documents than Tier 2 does. However,
there are queries which documents in Tier 1 are not sufficient to
serve, and they are handled by Tier 2.

The selection of documents for Tier 1 critically impacts the
efficiency of tiering. The set of documents indexed in Tier 1 should
be small enough to make searches on the tier more efficient than
searches on Tier 2, while still being able to faithfully serve a good
fraction of traffic in order to justify the opportunity cost of adding
more compute power to Tier 2 instead.  While early approaches used
heuristic rules \citep{risvik2003multi, baeza2009efficiency} for the
selection, \citet{leung2010optimal} formulated the selection problem
as a max-flow problem, and developed an efficient optimization
algorithm for it.

However, these approaches optimize their selection with respect to a
static set of queries from the past. Due to the heavy-tailed nature of
query distributions in information retrieval, a large fraction of
queries in the incoming traffic are novel ones which never appeared in
the data from the past \citep{baeza2007impact}.  Therefore, these
approaches can overfit to the ``training data'' used for the
optimization, and generalize poorly to the incoming traffic.

In order to optimize the tiered architecture in terms of its
generalization performance on the future traffic, we first propose a
novel method of tiering which a tier is defined with the set of
clauses it comprehensively indexes.  Then, we formulate the task of
finding the best selection of clauses as a stochastic optimization
problem, and follow the methodology of regularized empirical risk
minimization to optimize the generalization performance. While this
involves a combinatorial optimization, we also show that the problem
can be cast as a stochastic submodular maximization problem with a
submodular knapsack constraint, which is called Submodular Cost
Submoudlar Knapsack (SCSK) \cite{iyer2013submodular}. We also develop
efficient optimization algorithms for the SCSK problem which make it
possible to apply the proposed tiering method on large-scale corpora
in the real world.



This paper makes contributions to both machine learning and
information retrieval community. For the machine learning research, we
introduce a novel application of large-scale stochastic submodular
optimization, orders of magnitudes larger than applications considered
in the literature. We also develop efficient algorithms for the SCSK
problem, which can be used in other applications and thus are of
independent interest.  For the information retrieval community, we
show the importance of formulating the tiering as a learning problem,
and develop an efficient tiering method. We demonstrate the practical
utility of our approach by applying our method to real data from a
large-scale commercial search engine.

\section{Formulation and Previous Work}

\subsection{Matching}

In this paper, we focus on the task of finding the set of documents
that match every term in a query. This is called \emph{matching}, and
matching algorithm is often the most computationally challenging
component in scaling a search engine to large corpora
\citep{goodwin2017bitfunnel}.
To be specific, let $V$ be the possibly infinite vocabulary which
contains all terms that appear in the corpus or a query. Let $D$ be
the corpus to search on. Each document $d \in D$ is represented as a
set of terms in it, i.e., $d \subseteq V$. Let $\Qcal$ be the
probability distribution of queries, and each query $q$ drawn from
$\Qcal$ is also represented as a subset of $V$. Then, the match set of
a query $q$ is defined as the set of documents which contain all terms
in the query:
\begin{align}
  m(q) := \cbr{ d \in D; q \subseteq d} = \bigcap_{v \in q} \cbr{d \in D; v \in d}.
  \label{eq:match}
\end{align}
The goal of a matching algorithm is to efficiently compute the match
set (\ref{eq:match}) over an intimidatingly large corpus. Then,
documents in the match sets are iteratively sorted and filtered by
more sophisticated relevance algorithms to determine the final
presentation to the user.  Since these post-processing algorithms are
too expensive to apply to the entire corpus, they rely on matching
algorithms for drastically reducing the scope of analysis. This is why
the efficiency of a matching algorithm is a critical determinant of
the scalability of the information retrieval system
\citep{risvik2013maguro, goodwin2017bitfunnel}.


\begin{table}
  \begin{center}
    {\small
    \begin{tabular}{c|ccccc}
      \hline \hline
      & \texttt{red} & \texttt{blue} & \texttt{shirt} & \texttt{pants} & \texttt{striped} \\
      \hline
      $D_1$ (red shirt striped) & X & & X & & X \\
      $D_2$ (blue shirt striped) &  & X & X & & X \\
      $D_3$ (red shirt) & X &  & X & &  \\
      $D_4$ (red pants striped) & X & &  & X & X \\
      $D_5$ (blue pants striped) &  & X &  & X & X \\
      $D_6$ (blue pants) &  & X &  & X  &  \\      
      \hline \hline     
    \end{tabular}
    }
  \end{center}
  \caption{An Example Corpus $\Dcal = \cbr{D_1, D_2, \ldots, D_6}$. X
    denotes that the document in the corresponding row contains the
    word in the corresponding column. }
  \label{tab:example_corpus}
\end{table}

To illusrate the concept of matching with examples, consider a corpus
with six documents, shown in Table~\ref{tab:example_corpus}. The match
set of a query ``red shirt'' would be the intersection of postings
list of the word \texttt{red} and that of \texttt{shirt}:
$m(\cbr{ \texttt{red}, \texttt{shirt} }) = \cbr{D_1, D_3, D_4} \cap
\cbr{D_1, D_2, D_3} = \cbr{D_1, D_3}$. Similarly,
$m(\cbr{ \texttt{blue}, \texttt{pants}, \texttt{striped} }) =
\cbr{D_2, D_5, D_6} \cap \cbr{D_4, D_5, D_6} \cap \cbr{D_1, D_2, D_4,
  D_5} = \cbr{D_5}$. Note that documents in the match set are not yet
ordered. The ordering of documents presented to a user is determined
by a ranking algorithm which take the match set as an input.

\subsection{Tiering}

Tiering \citep{risvik2003multi} improves the efficiency of finding
match sets of queries by selectively reducing the scope of search into
a subset of documents in the corpus.  In this paper, we focus on the
case of two tiers, which is standard and simple to illustrate.
Designing a tiered architecture requires defining two components: a
document classifier and a query classifier.

Let $\phi: \Dcal \rightarrow \cbr{1, 2}$ be the \emph{document
  classifier}, which determines the lowest tier an input document
should be indexed in. Tier 1 indexes
$\Dcal^1 := \cbr{ d \in \Dcal; \phi(d) \leq 1 }$ only, whereas Tier 2
indexes every document;
$\Dcal^2 := \cbr{ d \in \Dcal; \phi(d) \leq 2 } = \Dcal$.  Let
$\abr{\cdot}$ be the cardinality of a set.  In order for queries to be
more efficiently executed in Tier 1 than in Tier 2, $\abr{ \Dcal^1 }$
should be substantially smaller than $\abr{ \Dcal }$.  For example,
documents in a large corpus are often partitioned into shards
\citep{barroso2003web} such that each partition of the index assigned
to a shard fits into the memory capacity of a single machine. In such
case, half-sized Tier 1 index (i.e.,
$\abr{\Dcal^1} < \frac{\abr{\Dcal^2}}{2}$) would require half the
number of machines needed by Tier 2 in processing each query.

On the other hand, let $\psi: \Qcal \rightarrow \cbr{1, 2}$ be the
\emph{query classifier}, which determines the tier the query should be
executed in.  Note that Tier 1 is unable to return the entire match
set of a query $q$ unless $m(q) \subseteq \Dcal^1$; whenever a query
is classified into Tier 1 but incomplete match set was found, i.e.,
$\psi(q) = 1$ but $m(q) \nsubseteq \Dcal^1$, we consider this as an
\emph{incorrect} query classification.  Some tiered systems don't
guarantee their query classifications to be always correct
\citep{risvik2003multi, baeza2009efficiency}.

Since any document missed by a matching algorithm will also be missed
in the final search result presented to the user, it is critical for a
matching algorithm to have a nearly perfect recall
\citep{ntoulas2007pruning, goodwin2017bitfunnel}. Incorrect query
classifications to Tier 1, however, incur false negative errors. In
applications which the searchability of a document should be
guaranteed, these errors can be intolerable. Therefore, we focus on
developing methods which always make correct query classifications and
return the comprehensive match set for every query. See
Figure~\ref{fig:tiering} for a graphical illustration.


\begin{figure}
  \begin{tikzpicture}[block/.style={rectangle,draw}]
    \node [block] (query) {Query $q$};
    \node [ellipse, minimum height=1cm, draw] (tier1) [below of=query, xshift=-3cm, yshift=-1cm] {\begin{tabular}{c}Tier 1\\($\Dcal_1$)\end{tabular}};
    \node [ellipse, minimum height=2cm, draw] (tier2) [below of=query, xshift=3cm, yshift=-1cm] {\begin{tabular}{c}Tier 2\\($\Dcal_2 = \Dcal$)\end{tabular}};
    \path [draw, ->] (query) edge [left, align=right, bend right=10] node {$\psi(q) = 1$} (tier1);
    \path [draw, ->] (tier1) edge [right, align=right, bend right=10, pos=0.1] node {$m(q) \subset \Dcal_1$} (query);
    \path [draw, ->] (query) edge [left, align=right, bend right=10] node {$\psi(q) = 2$} (tier2);
    \path [draw, ->] (tier2) edge [right, align=right, bend right=10] node {$m(q) \subset \Dcal_2$} (query);
    \node [block, draw] (document) [below of=query, yshift=-3cm] {Document $d$};

    \path [draw, ->] (document) edge [left, align=right] node {$\phi(d) = 1$} (tier1);
    \path [draw, ->] (document) edge [right, align=left] node {$\phi(d) = 1$ or $2$} (tier2);
  \end{tikzpicture}
  \caption{Illustration of Tiering.  At the indexing time, all
    documents are indexed in Tier 2, and only documents with
    $\phi(d) = 1$ are \emph{additionally} indexed in Tier 1, which is
    thus smaller than Tier 2 and more effective in serving queries.
    On the other hand, an incoming query is routed to either Tier 1 or
    Tier 2, depending on its query classification result $\psi(q)$. }
  \label{fig:tiering}
\end{figure}
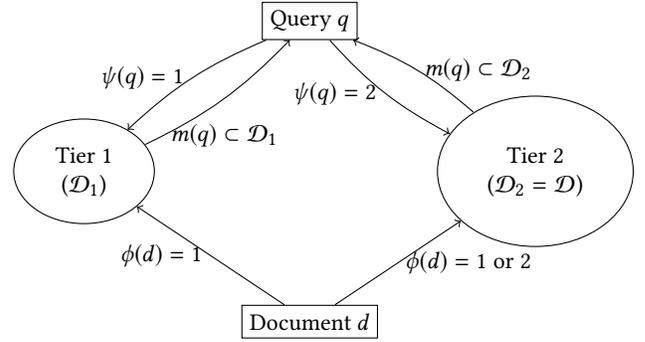

\subsection{Maximum Flow-Based Optimization}

In order to maximize the efficiency of a tiered architecture,
$\Dcal^1$, documents for Tier 1, should be carefully chosen so that a
good fraction of traffic can be efficiently handled by Tier 1.  We
formulate this as the following stochastic optimization problem, which
objective is to maximize the fraction of traffic covered by Tier 1:
\begin{align}
  &\max_{\psi, \phi}
    \PP_{q \sim \Qcal} \sbr{ \psi(q) = 1 },
    \text{ subject to } \label{obj} \\
  &\;\;\; \abr{\Dcal_1} = \abr{ \cbr{ d \in \Dcal;  \phi(d) = 1 } } \leq B, \label{capconst} \\
  &\;\;\; \text{for every query } q \text{ with } \psi(q) = 1, \phi(d) = 1 \text{ for every } d \in m(q). \label{corconst}
\end{align}
$B$ in constraint \eqref{capconst} denotes the capacity of Tier 1, and
the constraint \eqref{corconst} represents the correctness of query
classifications.

Assuming we have an access to sample queries of size $n$, which we
denote as $Q_n$, \citet{leung2010optimal} approaches the optimal
tiering as an optimization problem of selecting a subset $X$ of $Q_n$
which will be eligible for Tier 1\footnote{We simplified the objective
  proposed in \citet{leung2010optimal} for the case of two tiers.}:
\begin{align}
  \max_{X \subset Q_n} \frac{\abr{X}}{n}
  \text{ subject to }
  \abr{ \bigcup_{q \in X} m(q) } \leq B.
  \label{eq:flowobj}
\end{align}
Although solving the original problem \eqref{eq:flowobj} requires an
integer programming and thus it is practically infeasible even at
moderate scale, they develop efficient solvers by replacing the
constraint with a partial Lagrangian, and optimizing the convex
relaxation of the objective. Since this approach is equivalent to
solving a maximum flow problem over a directed graph, we call this
method \texttt{flow} whenever an abbreviation is needed.

With a solution of \eqref{eq:flowobj}, which we denote as
$X^{\texttt{flow}}$, the query classifier and the document classifier
of \texttt{flow} can be defined as follows:
\begin{align}
  \psi^{\texttt{flow}}(q)=
  \begin{cases}
    1, & \text{ if } q \in X^{\texttt{flow}} \\
    2, & \text{otherwise} \\
  \end{cases}
\end{align}
\begin{align}
  \phi^{\texttt{flow}}(d)=
  \begin{cases}
    1, & \text{ if } d \in m(q) \text{ for some } q \in X^{\texttt{flow}} \\
    2. & \text{otherwise} \\
  \end{cases}
\end{align}
It is easy to see that if $X^{\text{flow}}$ is a solution of
\eqref{eq:flowobj}, then induced query classifier
$\psi^{\texttt{flow}}(q)$ and document classifier
$\phi^{\texttt{flow}}(d)$ satisfy constraints (\ref{capconst}) and
(\ref{corconst}). However, the objective of (\ref{eq:flowobj}) is
\begin{align*}
  \frac{\abr{X^{\texttt{flow}}}}{n} = \frac{\abr{ \cbr{q \in Q_n ; \psi^{\texttt{flow}}(q) = 1} }}{n}
  = \PP_{q \sim \Qcal_n} \sbr{\psi(q)^{\texttt{flow}} = 1},
\end{align*}
which is the probability of a query covered by Tier 1 in the
\emph{empirical distribution} $\Qcal_n$ induced by $Q_n$, rather than
in the original distribution $\Qcal$ itself. Consequently, optimizing
(\ref{eq:flowobj}) is at the risk of overfitting to the training data
$Q_n$.

Indeed, there are two clear limitations of this approach when its
generalization performance to incoming traffic is concerned. First,
any
query which did not appear in the training data, i.e., $q \notin Q_n$,
will be routed to Tier 2 because by design, the method can select
queries in the training data only; $X^{\texttt{flow}} \subseteq Q_n$.
Due to the heavy-tailed nature of query distributions, no matter how
large the training data $Q_n$ is, a considerable fraction of new
samples from $\Qcal$ would not overlap with $Q_n$
\citep{baeza2007impact}, all of which are missed opportunities for
\texttt{flow}. On the other hand, very specific queries in the
training data which are very unlikely to reappear, for example a query
like ``book on submodular optimization with red cover'', are more
likely to be selected because their match sets are small and therefore
it is easy to meet the correctness constraint with them.
Therefore, a tiering method which allows more effective optimization
of generalization performance is called for.

\subsection{Other Related Work}


There are other methods for optimizing the size of index for Tier 1.
Index pruning methods \citep{carmel2001static} reduce the size of the
index by removing postings which minimally impact the quality of the
search results.  While these methods can reduce the size of the
backward index, they don't have a direct control for the size of the
forward index. On the other hand, we focus on controlling the size of
the forward index by constraining the number of documents added to the
index, without directly optimizing the size of the backward index.
This is because commercial search engines need to return rich metadata
associated with each document in the search result, and therefore the
size of the forward index dominates the size of the backward index.
Conceptually, however, these methods are complementary can be combined
together.  Most index pruning methods do not guarantee the correctness
of query classifications, but \citet{ntoulas2007pruning} is a notable
exception which provides similar guarantees as ours; while they
propose simple heuristics for finding the optimal set of terms for
Tier 1, their optimization problem can also be formulated and
approached in the way similar to the method we discuss in
Section~\ref{sec:clause}.

\citet{anagnostopoulos2015stochastic} is also a relevant work which
formulates caching as a stochastic optimization problem. While we
focused on the generalization of query classifiers to the future
traffic, their work was concerned about stochastically modeling the
size of the cache needed to run their algorithm, and used the same
query classifier as in \citet{leung2010optimal} which cannot
generalize to queries unseen in the past.

\section{Tiering with Clause Selection}
\label{sec:clause}

In order to build a tiered architecture which can generalize to
queries unseen in the training data, we propose a novel method of
tiering which query and document classification decisions are made
with clauses, which are sets of terms and can be more granular than
full queries. This makes it possible for queries unseen in the
training data to be classified into Tier 1, as long as some clauses in
the query were observed.

\subsection{Query and Document Classifier}
\label{ssec:clause_classifier}

We parameterize query and document classifiers with
$X^{\text{clause}} \subseteq 2^V$, which is a subset of all possible
clauses in the vocabulary.  Then, the query classifier and the
document classifier identically check whether any of the clauses in a
query or a document is included in the selection:
\begin{align}
  \psi^{\texttt{clause}}(q) =
  \begin{cases}
    1, & \text{ if } c \subseteq q \text{ for some } c \in X^{\texttt{clause}} \\
    2, & \text{otherwise} \\
  \end{cases}
\end{align}
\begin{align}
  \phi^{\texttt{clause}}(d) =
  \begin{cases}
    1, & \text{ if } c \subseteq d \text{ for some } c \in X^{\texttt{clause}} \\
    2. & \text{otherwise} \\
  \end{cases}
\end{align}
Efficient algorithms for computing these ``subset query'' functions
exist, e.g., \citet{charikar2002new} or \citet{savnik2013index}, which
make it possible for these classifiers to be used in applications with
low latency requirements.

To illustrate these classifiers, consider the example corpus in
Table~\ref{tab:example_corpus} and suppose we chose two clauses
$X^{\texttt{clause}} = \cbr{ \cbr{\texttt{red}}, \cbr{\texttt{blue},
    \texttt{shirt}} }$.  Then, documents which contain a single-word
clause ``red'' or a double-word clause ``blue shirt'' will be
classified into Tier 1: $\Dcal^1 = \cbr{D_1, D_2, D_3, D_4}$. With
these documents, Tier 1 shall serve queries such as ``red'', ``red
shirt'', ``red pants'', or ``blue shirt striped'', but not ``blue
pants'', because neither $\cbr{\texttt{red}}$ nor
$\cbr{\texttt{blue}, \texttt{shirt}}$ is a subset of
$\cbr{\texttt{blue}, \texttt{pants}}$. Indeed,
$D_6 (\text{blue pants})$ is not included in $\Dcal_1$.

Clearly, the query classifier is always correct:
\begin{theorem}[Correctness]
  \label{thm:correct}
  For any $q \subseteq V$ with $\psi^{\texttt{clause}}(q) = 1$, we
  have $\phi^{\texttt{clause}}(d) = 1$ for any $d \in m(q)$.
  This implies $m(q) \subseteq \Dcal_1$.
\end{theorem}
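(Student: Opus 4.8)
The plan is to unwind the definitions and observe that the claim reduces to transitivity of set inclusion. First I would use the hypothesis $\psi^{\texttt{clause}}(q) = 1$: by the definition of the clause-based query classifier, this guarantees the existence of a clause $c \in X^{\texttt{clause}}$ with $c \subseteq q$. Next, fixing an arbitrary $d \in m(q)$, I would invoke the definition of the match set in \eqref{eq:match}, which gives $q \subseteq d$. Chaining the two inclusions yields $c \subseteq q \subseteq d$, hence $c \subseteq d$ for the same clause $c \in X^{\texttt{clause}}$. This is precisely the condition in the definition of $\phi^{\texttt{clause}}$ that forces $\phi^{\texttt{clause}}(d) = 1$.

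To conclude $m(q) \subseteq \Dcal_1$, I would note that $d$ was an arbitrary element of $m(q)$, so every document in $m(q)$ is classified into Tier 1. Combining this with the trivial observation that $m(q) \subseteq \Dcal$ (match sets are subsets of the corpus by \eqref{eq:match}), we obtain $m(q) \subseteq \cbr{ d \in \Dcal; \phi^{\texttt{clause}}(d) = 1 } = \Dcal_1$, which is the second assertion.

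There is no genuine obstacle here: the statement holds because the query classifier and the document classifier test the \emph{same} family of clauses $X^{\texttt{clause}}$ through the subset relation, and $\subseteq$ is transitive. The only thing requiring care is bookkeeping—correctly reading off the relevant branch of the case definitions of $\psi^{\texttt{clause}}$ and $\phi^{\texttt{clause}}$, and recalling that $\Dcal_1$ is exactly the set of documents $d$ with $\phi^{\texttt{clause}}(d) = 1$—rather than any substantive difficulty.
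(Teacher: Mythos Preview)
Your proposal is correct and follows essentially the same argument as the paper: use $\psi^{\texttt{clause}}(q)=1$ to extract a clause $c\subseteq q$, use $d\in m(q)$ to get $q\subseteq d$, and conclude $c\subseteq d$ by transitivity, hence $\phi^{\texttt{clause}}(d)=1$. Your write-up is slightly more explicit than the paper's in spelling out the final implication $m(q)\subseteq\Dcal_1$, but the approach is identical.
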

\begin{proof}
  Suppose $\psi^{\texttt{clause}}(q) = 1$ for some $q \subseteq V$.
  Then, from the definition of $\psi^{\texttt{clause}}(q)$, there
  exists some $c \in X^{\texttt{clause}}$ which satisfies
  $c \subseteq q$. Now pick any $d \in m(q)$. By definition
  \eqref{eq:match}, we have $q \subseteq d$. Therefore,
  $c \subseteq d$ and $\phi^{\texttt{clause}}(d) = 1$.
\end{proof}

\subsection{Stochastic Submodular Optimization}

In order to maximize the efficiency of the tiered architecture, now we
need to solve (\ref{obj}) parameterized by
$X^{\texttt{clause}}$. Since Theorem~\ref{thm:correct} guarantees the
correctness, the corresponding constraint can be dropped and the
problem can be simplified as follows:
\begin{align}
  &\max_{X^{\texttt{clause}} \subseteq 2^V}
    \PP_{q \sim \Qcal}\sbr{ c \subseteq q \text{ for some } c \in X^{\texttt{clause}} },
  \label{eq:submod_obj} \\
  &\;\;\;\;\;\text{ subject to }
    \abr{ \cbr{d \in \Dcal; c \subseteq d \text{ for some } c \in X^{\texttt{clause}}}  } \leq B.
    \label{eq:submod_const}
\end{align}
While this may seem as an intractable combinatorial problem as every
subset of the power set of $V$ should be considered as a potential
solution, we argue that there is a strong structure in the problem we
can exploit for developing efficient optimization
algorithms. Specifically, we show that this is a stochastic version
\citep{karimi2017stochastic} of the Submodular Cost Submodular
Knapsack (SCSK) problem \citep{iyer2013submodular}, which aims to
maximize a stochastic submodular function under an upper bound
constraint on another submodular function.  This observation makes it
promising to develop practical algorithms for the problem, as
optimization problems formulated with submodular functions often allow
practical algorithms with theoretical guarantees.

To prove that (\ref{eq:submod_obj}) is a SCSK problem, we first remind
readers basic notions of submodularity
\citep{fujishige2005submodular}:
\begin{definition}[Gain, Monotonocity, and Submodularity]
  Let $G$ be a set, and $f: 2^G \rightarrow \RR$ be a set
  function. The \emph{gain} of $j \in G$ at $Y$ is defined as
  $f(j \mid Y) := f(Y \cup \cbr{j}) - f(Y)$. If $f(j \mid Y) \geq 0$
  for any $j \notin Y$, then $f(\cdot)$ is called \emph{monotone}.
  Submodular functions have diminishing gains; that is, $f(\cdot)$ is
  called \emph{submodular} if $f(j \mid Y) \geq f(j \mid Z)$ for any
  $Y$, $Z$ with $Y \subseteq Z$ and $j \notin Y$.
\end{definition}

Now we prove that the objective function is submodular.
\begin{theorem}[Submodularity of the Objective]
  The objective function (\ref{eq:submod_obj}),
  $f(X) := \PP_{q \sim \Qcal} \sbr{ c \subseteq q \text{ for some } c
    \in X }$, is a monotone submodular function of $X$.
\end{theorem}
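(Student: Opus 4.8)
The plan is to reduce the statement to the elementary fact that a single‑query ``coverage indicator'' is monotone submodular, and then push these two properties through the expectation defining $f$.

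First I would fix a query $q \subseteq V$ and introduce $g_q : 2^{2^V} \to \cbr{0,1}$ defined by $g_q(X) := \one\sbr{c \subseteq q \text{ for some } c \in X}$. Equivalently, writing $\Ccal_q := \cbr{c \in 2^V; c \subseteq q}$ for the (finite) collection of clauses contained in $q$, we have $g_q(X) = \one\sbr{X \cap \Ccal_q \neq \emptyset}$, i.e.\ $g_q$ is exactly a $\cbr{0,1}$‑valued coverage function with respect to the fixed target set $\Ccal_q$. By definition of $f$ and linearity of expectation applied to the indicator inside the probability, $f(X) = \EE_{q \sim \Qcal}\sbr{g_q(X)}$.

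Next I would verify directly from the stated definitions that each $g_q$ is monotone and submodular. For monotonicity: if $j \notin Y$ and $g_q(Y) = 1$, then also $g_q(Y \cup \cbr{j}) = 1$, so $g_q(j \mid Y) \in \cbr{0,1}$ and in particular $g_q(j \mid Y) \geq 0$. For submodularity, take $Y \subseteq Z$ and $j \notin Y$. If $j \in Z$ then $g_q(j \mid Z) = 0 \leq g_q(j \mid Y)$ by the monotonicity just shown. If $j \notin Z$, suppose $g_q(j \mid Z) = 1$; this forces $g_q(Z) = 0$ and $g_q(Z \cup \cbr{j}) = 1$, hence $c \not\subseteq q$ for every $c \in Z$ and $j \subseteq q$. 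Since $Y \subseteq Z$, also $c \not\subseteq q$ for every $c \in Y$, so $g_q(Y) = 0$ while $g_q(Y \cup \cbr{j}) = 1$, giving $g_q(j \mid Y) = 1$. In every case $g_q(j \mid Y) \geq g_q(j \mid Z)$.

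Finally I would conclude that $f$ inherits both properties from the family $\cbr{g_q}$. Because the gain operator is linear, $f(j \mid Y) = \EE_{q \sim \Qcal}\sbr{g_q(j \mid Y)}$, which is $\geq 0$ since each $g_q(j \mid Y) \geq 0$; this is monotonicity. And for $Y \subseteq Z$, $j \notin Y$, $f(j \mid Y) - f(j \mid Z) = \EE_{q \sim \Qcal}\sbr{g_q(j \mid Y) - g_q(j \mid Z)} \geq 0$ by the pointwise inequality above; this is submodularity. The only point deserving a word of care is that this ``closure under expectation'' remains valid when $\Qcal$ is continuous or has infinite support: all integrands lie in $[0,1]$, so the expectations exist, and a pointwise inequality between bounded measurable functions survives integration against the probability measure $\Qcal$. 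I do not expect a real obstacle; the only things to get right are the bookkeeping that the gain of the expectation equals the expectation of the gains, and the edge case $j \in Z$ that is permitted by the definition of submodularity used here.
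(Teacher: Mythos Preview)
Your proposal is correct and follows essentially the same approach as the paper: define the per-query indicator $g_q$ (the paper calls it $f_q$), verify it is monotone submodular by a short case analysis, and then pass both properties through the expectation $f(X)=\EE_{q\sim\Qcal}[g_q(X)]$. The only cosmetic differences are that you organize the submodularity cases around whether $j\in Z$ and whether $g_q(j\mid Z)=1$, whereas the paper splits on which of $Y,Z$ already contains a clause of $q$; and you spell out the linearity-of-gain and boundedness arguments for the expectation step a bit more explicitly than the paper's one-line appeal to closure under convex combinations.
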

\begin{proof}
  First, pick any $q \subseteq V$, and define
  \begin{align*}
    f_q(X) := 1\cbr{c \subseteq q \text{ for some } c \in X},
  \end{align*}
  where $1\cbr{\cdot}$ is an indicator function. We start by showing
  monotone submodularity of this function.

  The monotonicity of the function can be shown by noting that:
  \begin{align*}
    f_q(j \mid Y) = 
    \begin{cases}
      0, & \text{ if } c \subseteq q \text{ for some } c \in Y \\
      0, & \text{ if } c \nsubseteq q \text{ for any } c \in Y \cup \cbr{j} \\
      1, & \text{ if } c \nsubseteq q \text{ for any } c \in Y \text{ but } j \subseteq q\\
    \end{cases}    
  \end{align*}
  and thus $f_q(j \mid Y) \geq 0$ for every case.

  Now suppose $Y, Z$ are given, with $Y \subseteq Z \subseteq 2^V$. We
  have three cases.
  \begin{itemize}
  \item Case 1: There exists some $c \in Y$ which $c \subseteq q$.
    Then, $f_q(Y) = f_q(Z) = 1$, and $f_q(j \mid Y) = f_q(j \mid Z) = 0$ as
    indicator functions are bounded above by 1.
  \item Case 2: There
    does not exist any $c \in Z$ which $c \subseteq q$.
    In this case, $f_q(Y) = f_q(Z) = 0$, and 
    $f_q(j \mid Y) = f_q(j \mid Z) = 1$ if $j \subseteq q$, and
    $f_q(j \mid Y) = f_q(j \mid Z) = 0$ otherwise.
  \item Case 3: There does not exist any $c \in Y$ with
    $c \subseteq q$, but there exists $c' \in Z$ which
    $c' \subseteq q$. Then, $f_q(Y) = 0$ and $f_q(Z) = 1$.  Since
    $f_q(\cdot)$ is bounded above by 1, $f_q(j \mid Z) = 0$.  Because
    $f_q(\cdot)$ is monotonic, $f_q(j \mid Y) \geq f_q(j \mid Z)$.
  \end{itemize}
  Therefore, $f_q(\cdot)$ is monotone submodular.

  In order to prove the monotone submodularity of $f(\cdot)$, note
  that $f(X) = \EE_{q \sim \Qcal} f_q(X)$. A convex combination of
  monotone submodular functions is again monotone submodular.
\end{proof}

\begin{theorem}[Submodularity of the Constraint]
  $g(X) := \abr{ \cbr{d \in \Dcal; c \subseteq d \text{ for some } c \in X}  }$ is monotone submodular.
\end{theorem}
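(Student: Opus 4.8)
The plan is to recycle the proof of the preceding theorem almost verbatim, since $g$ has exactly the same indicator-sum structure as the objective $f$. First I would fix a document $d \in \Dcal$ and define the per-document function $g_d(X) := 1\cbr{c \subseteq d \text{ for some } c \in X}$. This is literally the function $f_q$ from the previous proof with the query $q$ replaced by the document $d$: both are the indicator that some clause in $X$ is a subset of a fixed ground-set element. Consequently the identical three-case analysis --- on whether a witnessing clause already lies in the smaller argument, never appears even in the larger argument, or appears only in the larger argument --- shows that each $g_d$ is monotone and submodular. (Alternatively, one can observe that with $U_d := \cbr{c \in 2^V ; c \subseteq d}$ we have $g_d(X) = \min\cbr{1, \abr{X \cap U_d}}$, a saturated coverage function, which is a standard example of a monotone submodular function.)

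Next I would write $g(X) = \sum_{d \in \Dcal} g_d(X)$, which holds because the cardinality of a set equals the sum over its potential members of the indicator of membership. Since $\Dcal$ is a finite corpus, this is a finite sum of monotone submodular functions, and both monotonicity and submodularity are preserved under finite nonnegative linear combinations --- equivalently, $g = \abr{\Dcal} \cdot \EE_{d \sim \mathrm{Unif}(\Dcal)} g_d$ is a nonnegative multiple of a convex combination, so the same closure property invoked in the previous theorem applies. Hence $g$ is monotone submodular, and it is real-valued because $0 \le g(X) \le \abr{\Dcal} < \infty$ for every $X \subseteq 2^V$.

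There is essentially no obstacle here: the only points that are not a direct copy of the earlier argument are the bookkeeping that rewrites a cardinality as a sum of indicators, and the remark that the combining operation is now a plain finite sum rather than the probabilistic expectation used for $f$ --- neither of which affects the submodularity reasoning. If a self-contained write-up is preferred, the cleanest route is simply to restate the three-case gain analysis for $g_d$ with $q$ replaced by $d$ and then take the sum over $d \in \Dcal$.
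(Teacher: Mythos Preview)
Your proof is correct, but it takes a different route from the paper's. The paper does not decompose $g$ over documents; instead it rewrites the underlying set as a union over clauses,
\[
\cbr{d \in \Dcal; c \subseteq d \text{ for some } c \in X} = \bigcup_{c \in X} \cbr{d \in \Dcal; c \subseteq d} = \bigcup_{c \in X} m(c),
\]
so that $g(X) = \abr{\bigcup_{c \in X} m(c)}$ is recognised as a set-cover (coverage) function and the result is cited as a known fact. Your approach decomposes over $d \in \Dcal$ instead of over $c \in X$, writes $g = \sum_{d} g_d$ with each $g_d$ formally identical to the $f_q$ of the preceding theorem, and then invokes closure under nonnegative sums. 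What you gain is a self-contained argument that makes the structural parallel between $f$ and $g$ explicit (expectation over queries versus sum over documents of the same indicator), without appealing to an external reference. What the paper's argument gains is brevity and the connection to the standard coverage-function viewpoint, which is the lens through which $g$ is later manipulated in the algorithms. Both are perfectly valid; they are essentially dual decompositions of the same double sum $\sum_{d \in \Dcal} \sum_{c \in X} 1\cbr{c \subseteq d}$ after saturation.
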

\begin{proof}
  Observe that
  \begin{align*}
    \cbr{d \in \Dcal; c \subseteq d \text{ for some } c \in X} =
    \bigcup_{c \in X} \cbr{d \in \Dcal; c \subseteq d}
    = \bigcup_{c \in X} m(c).
  \end{align*}
  Therefore, $g(X)$ is a set covering function, which is known to be
  monotone submodular \citep{wolsey1982analysis}.
\end{proof}

\subsection{Regularized Empirical Risk Minimization}

Since we typically don't have an access to the true query distribution
$\Qcal$, we need to use our training data $Q_n$ and optimize
(\ref{eq:submod_obj}) with respect to the induced empirical
distribution $\Qcal_n$ instead of $\Qcal$. In order to avoid
overfitting to the training data, we follow the methodology of
regularized risk minimization, and control the capacity of the
function we learn \cite{vapnik1992principles}. Specifically, we
restrict the ground set of the objective function to be
$\bar{X} := \cbr{c \in 2^V; \PP_{q \sim \Qcal_n}\sbr{c \subseteq q}
  \geq \lambda}$, so that only clauses which frequency of appearance
in the training data is at least $\lambda$ are considered. The
optimization problem then becomes
\begin{align}
  &\max_{X \subseteq \bar{X}}
    f(X) := \PP_{q \sim \Qcal_n}\sbr{ c \subseteq q \text{ for some } c \in X },
    \label{eq:emp_obj}\\
  &\;\;\;\;\;\text{ subject to }
    g(X) := \abr{ \bigcup_{c \in X} m(c)}  \leq B \nonumber .
\end{align}
Not only does this regularize the solution for improved generalization,
it also has a computational benefit 
because the number of clauses to be considered is drastically reduced.
The time complexity of most submodular optimization algorithms are at
least linear to the cardinality of the ground set.  Also note that
$\bar{X}$ can be efficiently computed from $Q_n$ using frequent
pattern mining algorithms, and we use \texttt{FPGrowth}
\citep{han2000mining} in our experiments.

\section{Optimization Algorithms for SCSK}
\label{sec:scsk}

Solving \eqref{eq:emp_obj} for large-scale information retrieval
systems is a computational challenge, since the number of documents to
be indexed $\abr{\Dcal}$ and the number of query logs available
$\abr{Q_n}$ in such systems are often at formidable scale
($10^6 \sim 10^{12}$) \citep{risvik2013maguro}. In order to achieve a
high coverage of traffic with Tier 1, we also consider a large number
of clauses $\abr{\bar{X}}$ at the scale of $10^4 \sim 10^6$, orders of
magnitude higher than problems considered to be large-scale in
submodular optimization research (e.g., \citep{iyer2013submodular,
  karimi2017stochastic}). Therefore, in order to apply the proposed
tiering method to real-world problems, it is critical to develop
efficient and scalable optimization algorithms for \eqref{eq:emp_obj}.
In this section, we develop multiple algorithms for the problem which
can be broadly applicable to other SCSK problems, and therefore of
interest on their own.

\subsection{Greedy}

Greedy algorithms are often very competitive at solving a wide range
of submodular maximization problems (e.g.,
\citep{nemhauser1978analysis, sviridenko2004note, wei2015mixed,
  bai2016algorithms}). Because of the submodular upper bound
constraint in the problem (\ref{eq:emp_obj}), however, most of
existing greedy algorithms for submodular maximization problems do not
directly apply. A notable exception is the greedy algorithm from
\citet{iyer2013submodular}, but it ignores the constraint when
comparing candidates to add to the solution; while this simplicity
facilitates the mathematical analysis of the algorithm, it is clearly
important to take the \emph{cost} of each clause into consideration,
and we empirically validate this in Section~\ref{ssec:exp_opt}.

To this end, we propose a novel greedy algorithm for SCSK, which
starts with an empty solution $X^0 = \emptyset$, and iteratively adds
a new clause with the highest utility ratio:

\begin{align}
  X^{t+1} \leftarrow X^{t} \cup \cbr{ j^{(t)} := \argmax_{j \in \bar{X}, g(X^t \cup \cbr{j}) \leq B} \frac{f(j \mid X^t)}{g(j \mid X^t)} }.
  \label{eq:greedy_update}
\end{align}
Computing $f(j \mid X^t)$ and $g(j \mid X^t)$ for every
$j \in \bar{X}$ at every iteration, however, is clearly not feasible
at the scale of problems we consider. This is because computing
$g(j \mid X^t)$ involve calculating intersections between $m(j)$ and
$\bigcup_{c \in X} m(c)$, which can be both large sets with millions
of elements, and computing $f(j \mid X^t)$ is also expensive for the
same reason with large scale training data.

\subsubsection{Lazy Greedy}
\label{ssec:lazygreedy}

The lazy evaluation technique has been essential to the success of
many large-scale submodular maximization algorithms (e.g.,
\citep{minoux1978accelerated, leskovec2007cost, ahmed2012fair}), as
they can effectively avoid the costly evaluation of gain on less
promising candidates.  The submodularity of the constraint $g(\cdot)$,
however, requires us to be more careful in applying the technique.

To illustrate, consider a simple case which $g(\cdot)$ is a modular
function; that is, there exists $w_0 \in \RR$ and
$\cbr{w_c \in \RR}_{c \in \bar{X}}$ such that
$g(X) = w_0 + \sum_{j \in X} w_j$. In this case, the problem reduces
to submodular knapsack, and the greedy procedure
(\ref{eq:greedy_update}) acquires strong guarantees
\citep{sviridenko2004note}.  The utility ratio
$\frac{f(j \mid X^t)}{g(j \mid X^t)}$ is nonincreasing in $t$, because
$f(\cdot)$ is sumbdoular and $g(j \mid X^t) = w_j$ is a constant.  The
classic lazy greedy algorithm \citep{minoux1978accelerated} exploits
this by maintaining a max heap of candidates sorted by the most recent
evaluation of the utility ratio, and re-evaluate the ratio for
candidates which are promising enough to be placed at the top of the
heap.  Since $g(\cdot)$ is also \emph{submodular} in SCSK, however,
the ratio can also increase over iterations, and thus this technique
is not directly applicable.


In order to leverage lazy evaluations for SCSK, we maintain the lower
bound of $g(j \mid X^t)$ as $\underline{g}(j \mid X^t)$ with the
following update rule:
\begin{align}
  \underline{g}(j \mid X^{t+1}) \leftarrow
  \max\rbr{0, 
  \underline{g}(j \mid X^{t}) - g(j^{(t)} \mid X^{t})}.
  \label{eq:lbd_update}
\end{align}
This allows us to efficiently update the lower bound when adding a new
clause to the solution. Only when the optimistic estimate of the
utility of a clause is good enough for consideration, we re-compute
the function to make the bound tight. We prove the correctness of this
update:
\begin{theorem}[Correctness of Updated Lower Bound]
  Suppose $g(\cdot)$ is a monotone submodular function, and
  $g(j \mid X^t) \geq \underline{g}(j \mid X^t)$. With the update rule
  \eqref{eq:lbd_update}, we have
  $g(j \mid X^{t+1}) \geq \underline{g}(j \mid X^{t+1})$.
\end{theorem}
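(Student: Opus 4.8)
The plan is to reduce everything to the elementary fact that a monotone set function can only increase when we add an element, together with the induction hypothesis $g(j \mid X^{t}) \ge \underline{g}(j \mid X^{t})$. Recall that $X^{t+1} = X^{t} \cup \cbr{j^{(t)}}$, so the claim $g(j \mid X^{t+1}) \ge \underline{g}(j \mid X^{t+1})$ splits into two cases coming from the $\max$ in \eqref{eq:lbd_update}.

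First, if the right-hand side of \eqref{eq:lbd_update} equals $0$, then we only need $g(j \mid X^{t+1}) \ge 0$, which is immediate: monotonicity of $g$ says that the gain of any element is nonnegative. Second, suppose the right-hand side equals $\underline{g}(j \mid X^{t}) - g(j^{(t)} \mid X^{t})$. Since $\underline{g}(j \mid X^{t}) \le g(j \mid X^{t})$ by hypothesis, it suffices to establish the ``triangle-type'' inequality
\[
  g(j \mid X^{t}) \;\le\; g(j \mid X^{t+1}) + g(j^{(t)} \mid X^{t}).
\]
I would prove this by expanding all three gains in terms of $g$ evaluated on subsets of $X^{t} \cup \cbr{j, j^{(t)}}$; the terms $g(X^{t})$ and $g(X^{t}\cup\cbr{j^{(t)}})$ telescope, and the inequality collapses to $g(X^{t} \cup \cbr{j}) \le g(X^{t} \cup \cbr{j, j^{(t)}})$, which is again just monotonicity. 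Combining the two cases yields $g(j \mid X^{t+1}) \ge \max\rbr{0,\, \underline{g}(j \mid X^{t}) - g(j^{(t)} \mid X^{t})} = \underline{g}(j \mid X^{t+1})$.

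There is no genuine obstacle here; the only thing to be careful about is the bookkeeping of which set each gain is conditioned on — in particular that $g(j \mid X^{t+1})$ is conditioned on $X^{t} \cup \cbr{j^{(t)}}$, not on $X^{t}$ — so that the telescoping in the second case is written correctly. I would also remark, after the proof, that submodularity of $g$ is never invoked: monotonicity alone drives the argument, so the bound-update rule remains valid in the modular special case (submodular knapsack) discussed earlier, and for any monotone cost function more generally.
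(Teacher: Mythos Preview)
Your proof is correct and essentially identical to the paper's: both expand the gains around $X^{t}\cup\cbr{j,j^{(t)}}$, telescope, and reduce the key step to $g(X^{t}\cup\cbr{j})\le g(X^{t}\cup\cbr{j,j^{(t)}})$, i.e.\ monotonicity, then handle the $0$ branch of the $\max$ separately via $g(j\mid X^{t+1})\ge 0$. Your closing remark that submodularity of $g$ is never used is accurate and worth keeping---the paper's hypothesis is stronger than necessary.
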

\begin{proof}
  \begin{align*}
    &g(j \mid X^{t+1}) = g(X^{t+1} \cup \cbr{j}) - g(X^{t+1})\\
    &= \cbr{g(X^t) + g(j \mid X^t) + g(j^{(t)} \mid X^t \cup \cbr{j}} - \cbr{g(X^{t}) + g(j^{(t)} \mid X^{(t)}) } \\
    &= g(j \mid X^t) - g(j^{(t)} \mid X^{(t)}) + g(j^t \mid X^t \cup \cbr{j}) \\
    &\geq g(j \mid X^t) - g(j^{(t)} \mid X^{(t)})
      \geq \underline{g}(j \mid X^t) - g(j^{(t)} \mid X^{(t)}),
  \end{align*}
  using the definition of gain, the monotonicity of $g(\cdot)$ for the
  first inequality, and the assumption for the second
  inequality. Combining this with the fact that
  $g(j \mid X^{t+1}) \geq 0$ due to monotonicity of $g(\cdot)$, the
  proof is completed.
\end{proof}

Algorithm~\ref{alg:lazygreedy} shows the pseudo-code of the lazy
greedy algorithm we propose. $\overline{f}(j \mid X^t)$ is potentially
outdated as it equals to $f(j \mid X^s)$ for some $0 \leq s \leq t$,
and serves as an upper bound due to submodularity. We then maintain a
max heap of every feasible candidate, exactly compute the utility
ratio of the best candidate at the moment, and add it to the solution
if it is better than the second candidate in the heap. This allows us
to avoid computing the utility ratio of candidates which even
optimistic estimate of the ratio is lower than $j^{(t)}$, the clause
selected in the $t$-th iteration.

\SetKwProg{Fn}{Function}{}{end} 
\SetKw{Continue}{continue}

\SetKwData{maxheap}{MaxHeap}
\SetKwFunction{LazyGreedy}{LazyGreedy}%
\begin{algorithm}
  \Fn(){\LazyGreedy{$f$, $g$, $B$, $\bar{X}$}}{
    $t=0$, $X^0 = \emptyset$\; $\overline{f}(j \mid X^0) = f(\cbr{j})$, $\underline{g}(j \mid X^0) = g(\cbr{j})$ for every $j \in \bar{X}$\;
    \While{$g(X^t) \leq B$}{
      \texttt{heap} $\leftarrow$ Max heap with items $\cbr{j \in \bar{X}; \underline{g}(X^t \cup \cbr{j}) \leq B}$,  and score $\frac{\overline{f}(j \mid X^t)}{\underline{g}(j \mid X^t)}$ \;
      \While{\texttt{heap.size() > 0}}{
        $j \leftarrow$ \texttt{heap.pop()}\;
        \tcp{Tighten bounds}
        $\overline{f}(j \mid X^t) \leftarrow f(j \mid X^t)$\;
        $\underline{g}(j \mid X^t) \leftarrow g(j \mid X^t)$\;
        \If{$g(X^t) + g(j \mid X^t) > B$} {
          \Continue
        }
        \eIf{\texttt{heap.isEmpty()} or
          $\frac{\overline{f}(j \mid X^t)}{\underline{g}(j \mid X^t)}
          \geq \frac{\overline{f}(k \mid X^t)}{\underline{g}(k \mid X^t)}$
          with $k \leftarrow$ \texttt{heap.peek()}} {
          $X^{t+1} \leftarrow X^t \cup \cbr{j}$\;
          $\overline{f}(i \mid X^{t+1}) = \overline{f}(i \mid X^t)$, $\underline{g}(i \mid X^{t+1}) = \max(0, \underline{g}(i \mid X^{t}) - g(j \mid X^{t}))$ for every $i \in \bar{X}$\;
          $t \leftarrow t + 1$\;
        }{
          \texttt{heap.push($j$)}\;
        }
      }
    }
  }
  \caption{Lazy Greedy Algorithm for SCSK}
  \label{alg:lazygreedy}
\end{algorithm}

\subsubsection{Optimistic-Pessimistic Parallel Greedy}
\label{ssec:optpesgreedy}

While the lazy evaluation procedure of Algorithm~(\ref{eq:lbd_update})
reduces the number of evaluations of $f(\cdot \mid X^t)$ and
$g(\cdot \mid X^t)$, the procedure is inherently sequential. In order
to leverage the compute power for parallel processing available in
modern computers, we propose an extension of the lazy evaluation
algorithm.  In addition to optimistic estimates
$\frac{\overline{f}(j \mid X^t)}{\underline{g}(j \mid X^t)}$ in the
lazy greedy algorithm, we also maintain \emph{pessimistic} estimates
$\frac{\underline{f}(j \mid X^t)}{\overline{g}(j \mid X^t)}$, where
$\underline{f}(j \mid X^t)$ is a lower bound of $f(j \mid X^t)$ using
the same update rule (\ref{eq:lbd_update}), and
$\overline{g}(j \mid X^t)$ is an upper bound of $g(j \mid X^t)$ using
outdated value of the gain. Then, we update the estimate of every
candidate which optimistic estimate is better than the best
pessimistic estimate. Algorithm~\ref{alg:optpes_greedy} illustrates
the pseudo-code.

\SetKwFunction{OptPesGreedy}{OptimisticPessimisticGreedy}%
\begin{algorithm}
  \Fn(){\OptPesGreedy{$f$, $g$, $B$, $\bar{X}$}}{
    $t=0$, $X^0 = \emptyset$\;
    $\overline{f}(j \mid X^0) \leftarrow \underline{f}(j \mid X^0) \leftarrow f(\cbr{j})$, $\overline{g}(j \mid X^0) \leftarrow \underline{g}(j \mid X^0) \leftarrow g(\cbr{j})$ for every $j \in \bar{X}$\;
    \While{$g(X^t) \leq B$}{
      $C \leftarrow \cbr{j \in \bar{X}; \underline{g}(X^t \cup \cbr{j}) \leq B, \frac{\overline{f}(j \mid X^t)}{\underline{g}(j \mid X^t)} \geq \max_{j'} \frac{\underline{f}(j' \mid X^t)}{\overline{g}(j' \mid X^t)}}$\;
      \tcp{Tighten bounds}
      \For{$j \in C$ in parallel}{
        $\overline{f}(j \mid X^t) \leftarrow \underline{f}(j \mid X^t) \leftarrow f(j \mid X^t)$\;
        $\overline{g}(j \mid X^t) \leftarrow \underline{g}(j \mid X^t) \leftarrow g(j \mid X^t)$\;
      }
      $j^{(t)} \leftarrow \arg\max_{j \in C, g(X^t \cup \cbr{j}) \leq B} \frac{f(j \mid X^t)}{g(j \mid X^t)}$\;
      $X^{t+1} \leftarrow X^t \cup \cbr{j^{(t)}}$\;
      $\overline{f}(i \mid X^{t+1}) = \overline{f}(i \mid X^t)$, $\overline{g}(i \mid X^{t+1}) = \overline{g}(i \mid X^t)$, $\underline{g}(i \mid X^{t+1}) = \max(0, \underline{g}(i \mid X^{t}) - g(j \mid X^{t}))$, $\underline{f}(i \mid X^{t+1}) = \max(0, \underline{f}(i \mid X^{t}) - f(j \mid X^{t}))$ for every $i \in C$\;
      $t \leftarrow t+1$\;
    }
  }
  \caption{Optimistic-Pessimistic Greedy Algorithm}
  \label{alg:optpes_greedy}
\end{algorithm}

We prove that this algorithm is consistent with the original greedy
update (\ref{eq:greedy_update}):
\begin{theorem}
  If $C$ is defined as in Algorithm~\ref{alg:optpes_greedy}, then
  $j^{(t)} \in C$.
\end{theorem}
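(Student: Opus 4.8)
The plan is to verify directly that the greedy pick $j^{(t)} := \argmax_{j \in \bar{X},\, g(X^t \cup \cbr{j}) \le B} \frac{f(j \mid X^t)}{g(j \mid X^t)}$ of \eqref{eq:greedy_update} meets both defining conditions of $C$. The only properties I would invoke are that the maintained quantities sandwich the true marginal gains at iteration $t$, namely $\underline{f}(j \mid X^t) \le f(j \mid X^t) \le \overline{f}(j \mid X^t)$ and $\underline{g}(j \mid X^t) \le g(j \mid X^t) \le \overline{g}(j \mid X^t)$ for every $j \in \bar{X}$, and that all these quantities are nonnegative. The lower bounds hold by induction on $t$ via the \emph{Correctness of Updated Lower Bound} theorem, with equality at $t = 0$ as the base case; the upper bounds hold because $\overline{f}(j \mid X^t) = f(j \mid X^s)$ and $\overline{g}(j \mid X^t) = g(j \mid X^s)$ for some $s \le t$, and submodularity together with $X^s \subseteq X^t$ gives $f(j \mid X^s) \ge f(j \mid X^t)$ and $g(j \mid X^s) \ge g(j \mid X^t)$; nonnegativity is just monotonicity of $f$ and $g$.

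First I would check the feasibility condition $\underline{g}(X^t \cup \cbr{j^{(t)}}) \le B$. Interpreting $\underline{g}(X^t \cup \cbr{j})$ as $g(X^t) + \underline{g}(j \mid X^t)$ (the exact, already-computed value $g(X^t)$ plus the lower bound on the marginal gain), this is immediate from $\underline{g}(j^{(t)} \mid X^t) \le g(j^{(t)} \mid X^t)$ and the fact that $j^{(t)}$ lies in the feasible domain of the argmax in \eqref{eq:greedy_update}, so that $g(X^t) + \underline{g}(j^{(t)} \mid X^t) \le g(X^t) + g(j^{(t)} \mid X^t) = g(X^t \cup \cbr{j^{(t)}}) \le B$.

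Then I would establish the ratio condition $\frac{\overline{f}(j^{(t)} \mid X^t)}{\underline{g}(j^{(t)} \mid X^t)} \ge \max_{j'} \frac{\underline{f}(j' \mid X^t)}{\overline{g}(j' \mid X^t)}$. For each $j'$ in the maximum I would chain
\[
  \frac{\underline{f}(j' \mid X^t)}{\overline{g}(j' \mid X^t)}
  \;\le\; \frac{f(j' \mid X^t)}{g(j' \mid X^t)}
  \;\le\; \frac{f(j^{(t)} \mid X^t)}{g(j^{(t)} \mid X^t)}
  \;\le\; \frac{\overline{f}(j^{(t)} \mid X^t)}{\underline{g}(j^{(t)} \mid X^t)},
\]
where the two outer inequalities only enlarge a numerator and shrink a denominator (all positive), and the middle inequality is the optimality of $j^{(t)}$. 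Taking the maximum over $j'$ on the left-hand side finishes the proof.

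The step that needs care is the middle inequality: $j^{(t)}$ maximizes the utility ratio only over \emph{feasible} clauses (those $j$ with $g(X^t \cup \cbr{j}) \le B$), so $\frac{f(j' \mid X^t)}{g(j' \mid X^t)} \le \frac{f(j^{(t)} \mid X^t)}{g(j^{(t)} \mid X^t)}$ is valid only when $j'$ is itself feasible; accordingly the maximum in the definition of $C$ must be read as ranging over feasible candidates, since an infeasible clause can have an arbitrarily large ratio and cannot be compared against $j^{(t)}$. A secondary, purely technical point is a vanishing denominator: a clause with zero marginal cost $g(j \mid X^t) = 0$ indexes no new document, is trivially feasible, and is never worse than any competitor, so I would treat such clauses separately up front (with the convention that a positive marginal value over a zero marginal cost is $+\infty$) and run the chain above on clauses of strictly positive marginal cost.
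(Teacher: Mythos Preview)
Your proof is correct and follows essentially the same approach as the paper: the paper's proof is exactly your chain of three inequalities, using the sandwich bounds on the numerator and denominator for the outer steps and the optimality of $j^{(t)}$ for the middle one. You are more thorough than the paper in that you separately verify the feasibility condition $\underline{g}(X^t \cup \{j^{(t)}\}) \le B$ (the paper simply declares the ratio condition ``sufficient''), and you correctly flag the subtlety that the middle inequality only holds for feasible $j'$ and the edge case of vanishing marginal cost, neither of which the paper addresses.
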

\begin{proof}
  It is sufficient to prove
  $\frac{\overline{f}(j^{(t)} \mid X^t)}{\underline{g}(j^{(t)} \mid
    X^t)} \geq \max_{j'} \frac{\underline{f}(j' \mid
    X^t)}{\overline{g}(j' \mid X^t)}$. 
  Because of the definition of
  $j^{(t)}$ in (\ref{eq:greedy_update}),
  \begin{align*}
    \frac{\overline{f}(j^{(t)} \mid X^t)}{\underline{g}(j^{(t)} \mid X^t)}
    \geq
    \frac{f(j^{(t)} \mid X^t)}{g(j^{(t)} \mid X^t)}
    \geq
    \frac{f(j') \mid X^t)}{g(j') \mid X^t)}
    \geq
    \frac{\underline{f}(j') \mid X^t)}{\overline{g}(j') \mid X^t)},
  \end{align*}
  for any $j'$.
\end{proof}

\subsection{Iterative Submodular Knapsack}
\label{ssec:isk}

Iterative Submodular Knapsack is proposed by
\citet{iyer2013submodular}.  Inspired by the minorization-maximization
procedure from \citet{iyer2013fast}, $g(\cdot)$ is iteratively
approximated by a modular upper bound $\tilde{g}^{t+1}(\cdot)$, which
is exact at the current solution $X_t$; i.e.,
$g(X_t) = \tilde{g}^{t+1}(X_t)$, and $g(X) \leq \tilde{g}^{t}(X)$ for
every $X$ and $t$. They suggest two choices for the bound:
\begin{align}
  \tilde{g}_1^{t+1}(X) &:= g(X_t) - \sum_{j \in X_t \setminus X} g(j \mid X_t \setminus j) + \sum_{j \in X \setminus X_t} g(\cbr{j}), \nonumber \\
  \tilde{g}_2^{t+1}(X) &:= g(X_t) - \sum_{j \in X_t \setminus X} g(j \mid \bar{X} \setminus j) + \sum_{j \in X \setminus X_t} g(j \mid X_t).
  \label{eq:supergrad}
\end{align}
Given the choice of the modular upper bound, the solution is updated
as
\begin{align}
  X^{t+1} \leftarrow \arg\max_{X \in \bar{X}} f(X) \text{ subject to }
  \tilde{g}^{t}(X) \leq B,
  \label{eq:inner_knapsack}
\end{align}
which can be efficiently solved as a submodular knapsack problem
\cite{sviridenko2004note}.  Starting from $X^0 = \emptyset$, this
procedure is repeated until there is no change in the solution.
Algorithm~\ref{alg:isk} illustrates the pseudo-code.

\SetKwFunction{ISK}{ISK}%
\begin{algorithm}
  \Fn(){\ISK{$f$, $g$, $B$, $\bar{X}$, $m$}}{
    $t=0$, $X^0 = \emptyset$\;
    \While{$X^t \neq X^{t-1}$}{
      $X^{t+1} \leftarrow \arg\max_{X \in \bar{X}} f(X) \text{ subject to }
  \tilde{g}_m^{t}(X) \leq B$\;
    }
  }
  \caption{Iterative Submodular Knapsack Algorithm}
  \label{alg:isk}
\end{algorithm}

\section{Experiments}

We empirically evaluate the effectiveness of methods we propose in
this paper on real-world data. We use a corpus of about 8 million
documents in a particular category of a commercial search engine for
$\Dcal$, and we collected about 2 million queries uniformly sampled
over three days for the training data $Q_n$. We additionally sampled
about 700,000 queries in the following day for the test data.

We implemented every algorithm in Java, using its standard library for
multi-threading. We leveraged multi-threading in most of
straightforward opportunities for parallelization, most importantly
the computation of marginal gains and losses. In order to efficiently
compute set operations, we leveraged utilities provided by Apache
Lucene\footnote{\href{https://lucene.apache.org/}{https://lucene.apache.org/}}.
We also adopted techniques suggested in \citet{iyer2019memoization}
for incrementally computing marginal gains and losses of submodular
functions. All experiments were ran on a machine with 16 Intel Xeon
2.50GHz CPUs and 64GBs of RAM.

\subsection{Submodular Optimization Algorithms}
\label{ssec:exp_opt}

In this experiment, we focus on evaluating the performance of
submodular optimization algorithms we proposed in
Section~\ref{sec:scsk} for the SCSK problem. We set
$B = \frac{\abr{\Dcal}}{2}$, so that Tier 1 shall choose up to 50\%
documents of the whole corpus.  We consider following algorithms:

\begin{description}

\item[Constraint-Agnostic Greedy:] Greedy algorithm proposed in
  \citet{iyer2013submodular} for solving SCSK, which ignores the
  constraint function when comparing candidates. We implemented a lazy
  version of the algorithm \cite{minoux1978accelerated}.
  
\item[Greedy:] Greedily selects clauses according to the procedure
  \eqref{eq:greedy_update} we propose. Gains $f(j \mid X^t)$ and
  $g(j \mid X^t)$ are re-computed at every iteration.

\item[Lazy Greedy:] Algorithm~\ref{alg:lazygreedy}, which uses the
  same greedy procedure \eqref{eq:greedy_update} but improves
  efficiency by lazily evaluating candidates with max heap and
  optimistic estimates.
  
\item[Opt./Pes. Greedy:] Algorithm~\ref{alg:optpes_greedy}, which
  recomputes gains of clauses which optimistic estimate is better than
  the best pessimistic estimate.
  
\item[ISK:] Algorithm~\ref{alg:isk} from \ref{ssec:isk}. We use two
  variants ISK$_1$ and ISK$_2$, using each of the modular upper
  bound in \eqref{eq:supergrad}, respectively.
\end{description}

\begin{figure}
  \begin{tikzpicture}
    \begin{axis}[
      xlabel={Wall Clock Time (seconds)},
      ylabel={Tier 1 Query Coverage ($f(X)$)},
      yticklabels={,,},
      legend style={at={(0.5,1.1)},anchor=south},
      scaled y ticks = false]

      \addplot+[color=magenta, densely dotted, no markers, thick] table [col sep=comma, x=time,y=obj] {aggreedy.txt};
      \addlegendentry{Constraint-Agnostic Greedy}
      
      \addplot+[color=blue, no markers, thick] table [col sep=comma, x=time,y=obj] {naivegreedy.txt};
      \addlegendentry{Greedy}

      \addplot+[color=red, dashed, no markers, thick] table [col sep=comma,
      x=time,y=obj] {lazygreedy.txt}; \addlegendentry{Lazy Greedy}

      \addplot+[color=brown, densely dashed, no markers, thick] table [col sep=comma,
      x=time,y=obj] {optpesgreedy.txt}; \addlegendentry{Opt./Pes. Greedy}

      \addplot+[color=green, loosely dashed, no markers, thick] table [col sep=comma,
      x=time,y=obj] {isk2.txt}; \addlegendentry{ISK$_1$}

      \addplot+[color=black, dotted, thick, no markers, thick] table [col sep=comma,
      x=time,y=obj] {isk1.txt}; \addlegendentry{ISK$_2$}

    \end{axis}
  \end{tikzpicture}
  \caption{Comparison of the performance of optimization algorithms in
    terms of the objective function $f(X)$ as a function of wall clock
    time elapsed. Greedy and Lazy Greedy were terminated before
    reaching the stopping condition.  Units in the $y$-axis are hidden
    for the confidentiality of business.}
  \label{fig:wallclock}
\end{figure}
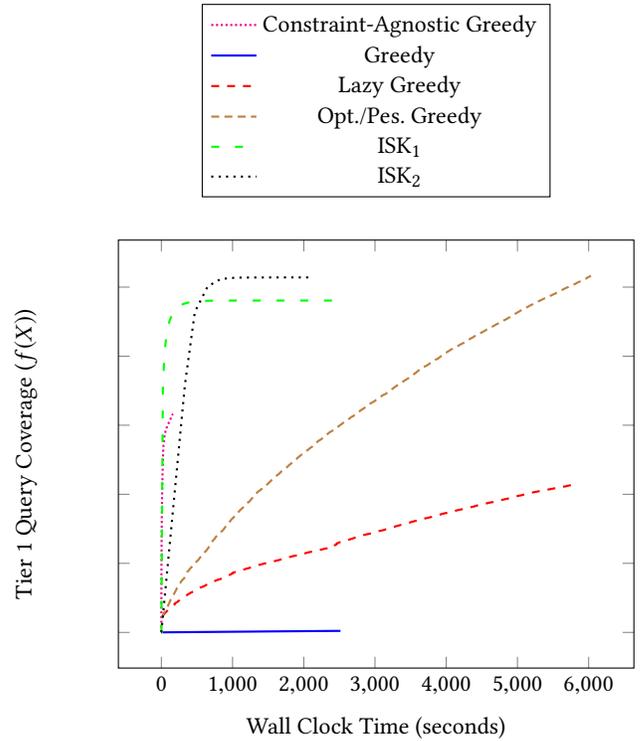

Figure~\ref{fig:wallclock} shows the value of the objective function
$f(X)$ with respect to the elapsed wall clock time.  It is noticeable
that ISK algorithms are much faster than most of greedy
algorithms. This is because first few iterations of ISK are much more
effective than those of greedy ones; the very first iteration of ISK
\eqref{eq:inner_knapsack} adds 28\% documents, whereas the greedy
algorithm needs to iterate hundreds of thousands of times to add the
same number of documents. While each iteration of ISK involves solving
a new submodular knapsack problem from scratch, the lazy evaluation
technique \citep{minoux1978accelerated} seems to be more efficient
with these submodular knapsack sub-problems than with the original
SCSK because only one function ($f(\cdot)$) is being approximated in
submodular knapsack, whereas we simultaneously bound two functions
$f(\cdot)$ and $g(\cdot)$ in SCSK.  On the other hand, the final
objective function value of the greedy algorithm was 7.6\% and 0.6\%
higher than that of ISK$_1$ and ISK$_2$, respectively. Therefore, the
greedy algorithm seems to be more effective at refining the
high-quality solution than ISK is, as each iteration of ISK relies on
a rough approximation of $g(\cdot)$.

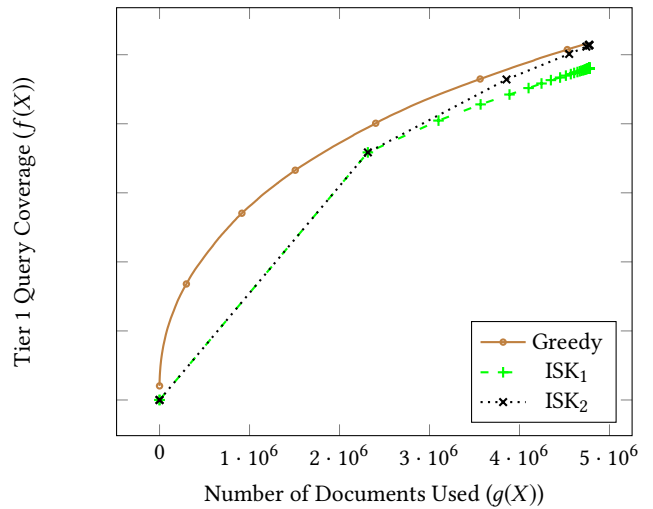
\begin{figure}
  \begin{tikzpicture}
    \begin{axis}[
      xlabel={Number of Documents Used ($g(X)$)},
      ylabel={Tier 1 Query Coverage ($f(X)$)},
      yticklabels={,,},
      legend pos=south east,
      scaled y ticks = false,
      scaled x ticks = false]

      \addplot+[color=brown, mark=o, mark repeat=100, mark size=1pt, thick] table [col sep=comma,
      x=con,y=obj] {optpesgreedy.txt}; \addlegendentry{Greedy}

      \addplot+[color=green, loosely dashed, mark=+, mark options={solid}, thick] table [col sep=comma,
      x=con,y=obj] {isk2.txt}; \addlegendentry{ISK$_1$}

      \addplot+[color=black, dotted, thick, mark=x, mark options={solid}, thick] table [col sep=comma,
      x=con,y=obj] {isk1.txt}; \addlegendentry{ISK$_2$}

    \end{axis}
  \end{tikzpicture}
  \caption{Solution path of different optimization algorithms.  Units
    in the $y$ axes are hidden for confidentiality. Intermediate
    solutions are marked on the line.}
  \label{fig:solutionpath}
\end{figure}

Also note that the greedy algorithm has an advantage of finding the
entire solution path for different values of the capacity parameter up
to $B$, as any intermediate solution $X^t$ can be considered as the
\emph{final} solution of the problem (\ref{eq:submod_obj}) with the
parameter $B$ set as $g(X^t)$. This is useful when the suitable size
of Tier 1 is not given a priori, and we need to search for the optimal
configuration of $B$. Figure~\ref{fig:solutionpath} shows the solution
path of different algorithms. ISK algorithms have very few
intermediate solutions which can be useful for determining $B$,
whereas the greedy algorithm has a very continuous solution path.

While Constraint-Agnostic Greedy algorithm is much faster than other
greedy algorithms because it ignores the constraint in the selection
process, it converges to a clearly suboptimal solution for the same
reason.  Therefore, the greedy procedure we propose
(\ref{eq:greedy_update}) seems to be more effective than
Constraint-Agnostic Greedy from \citet{iyer2013submodular}.  Among
algorithms which implement the proposed greedy procedure,
Opt./Pes. Greedy is the fastest because it leverages both
parallelization and lazy evaluation. This is confirmed in
Figure~\ref{fig:threads}; when smaller number of CPUs are used, the
efficiency gap between Lazy Greedy and Opt./Pes. Greedy becomes
narrower.

\begin{figure}
  \centering
  \begin{subfigure}[t]{0.25\textwidth}
    \centering
    \begin{tikzpicture}[scale=0.5]
      \begin{axis}[
        xlabel={Wall Clock Time (seconds)},
        ylabel={Tier 1 Query Coverage ($f(X)$)},
        yticklabels={,,},
        legend pos=south east,
        scaled y ticks = false]
        
        \addplot+[color=red, dashed, no markers, thick] table [col sep=comma,
        x=time,y=obj] {lazygreedy_thread2.txt}; \addlegendentry{Lazy Greedy}
        
        \addplot+[color=brown, densely dashed, no markers, thick] table [col sep=comma,
        x=time,y=obj] {optpesgreedy_thread2.txt}; \addlegendentry{Opt./Pes. Greedy}

        \addplot+[color=green, loosely dashed, no markers, thick] table [col sep=comma,
        x=time,y=obj] {mmshrink_thread2.txt}; \addlegendentry{ISK$_1$}

        \addplot+[color=black, dotted, thick, no markers, thick] table [col sep=comma,
        x=time,y=obj] {mmgrow_thread2.txt}; \addlegendentry{ISK$_2$}
      
      \end{axis}
    \end{tikzpicture}
    \caption{2 CPUs}
  \end{subfigure}
  ~
  \begin{subfigure}[t]{0.25\textwidth}
    \centering
    \begin{tikzpicture}[scale=0.5]
      \begin{axis}[
        xlabel={Wall Clock Time (seconds)},
        yticklabels={,,},
        legend pos=south east,
        scaled y ticks = false]
        
        \addplot+[color=red, dashed, no markers, thick] table [col sep=comma,
        x=time,y=obj] {lazygreedy_thread4.txt}; \addlegendentry{Lazy Greedy}
        
        \addplot+[color=brown, densely dashed, no markers, thick] table [col sep=comma,
        x=time,y=obj] {optpesgreedy_thread4.txt}; \addlegendentry{Opt./Pes. Greedy}

        \addplot+[color=green, loosely dashed, no markers, thick] table [col sep=comma,
        x=time,y=obj] {mmshrink_thread4.txt}; \addlegendentry{ISK$_1$}

        \addplot+[color=black, dotted, thick, no markers, thick] table [col sep=comma,
        x=time,y=obj] {mmgrow_thread4.txt}; \addlegendentry{ISK$_2$}
      
      \end{axis}
    \end{tikzpicture}
    \caption{4 CPUs}
  \end{subfigure}  
    
  \caption{Comparison of the performance of optimization algorithms in
    terms of the objective function $f(X)$ as a function of wall clock
    time elapsed, when different numbers of CPUs are used. $B$ was set
    as $\frac{\abr{\Dcal}}{4}$. Lazy Greedy was terminated before
    reaching the stopping condition.  Units in the $y$-axis are hidden
    for the confidentiality.}
  \label{fig:threads}
\end{figure}
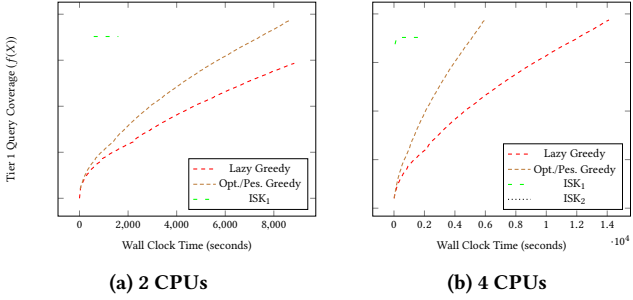

\subsection{Tiering Performance}

We compare the performance of tiering methods with respect to the
fraction of queries covered by Tier 1, which index is half the size of
the full index ($B = \frac{\abr{\Dcal}}{2}$). We consider following
methods:

\begin{description}
\item [\texttt{popularity}] An intuitive baseline from
  \citet{leung2010optimal} which selects $B$ most frequently appearing
  documents in the training data. That is, the score for each document
  $d$ is defined as
    $\PP_{q \sim \Qcal_n}\sbr{d \in m(q)}$,
  and top
  $B$ documents with respect to this score are chosen as
  $\Dcal_1$. Then, queries in $Q_n$ which match set is contained in
  these $B$ documents are set as $X^{\texttt{flow}}$.
\item [\texttt{flow-max}] The score for each document $d$ is defined
  as
  \begin{align*}
    \max_{q \in Q_n, d \in m(q)} \PP_{q \sim \Qcal_n}\sbr{q = q'},
  \end{align*}
  which is the maximum probability of the query which match set
  includes the document. Top $B$ documents according to this score are
  chosen, and again queries in $Q_n$ which match set is contained in
  these $B$ documents are set as $X^{\texttt{flow}}$. This rule is
  derived from the subgradient of the \texttt{flow}'s objective
  function. In \citet{leung2010optimal}, this simple heuristic was
  very competitive to principled optimization algorithms.
\item [\texttt{flow-sgd}] Convex relaxation of (\ref{eq:flowobj}),
  which is a maximum flow problem, is optimized with stochastic
  gradient descent. In order to provide regularization to this method,
  we introduced a hyperparameter $\lambda$, and removed any query
  which frequency is lower than the parameter; i.e., any $q \in Q_n$
  with $\PP_{q' \sim \Qcal_n}\sbr{q = q'} < \lambda$ are removed from
  the training data, so that the method does not overfit to rare
  queries.
\item [\texttt{clause}:] The clause selection-based query and document
  classifiers we proposed in Section~ \ref{ssec:clause_classifier}
  optimized with Opt./Pes. Greedy.
\end{description}

\begin{figure}
  \begin{tikzpicture}
    \begin{axis}[
      xlabel={Training Coverage},
      ylabel={Relative Test Coverage (\%)},
      ymin=0,
      xmin=40, xmax=70,
      xticklabels={,,},
      legend pos=south east]

      \addplot[color=brown, mark=*, only marks] coordinates {
        (44.1, 23.9)
      };
      \addlegendentry{\texttt{popularity}}

      \addplot[color=green, mark=+, only marks] coordinates {
        (45.1, 26.4)
      };
      \addlegendentry{\texttt{flow-max}}

      \addplot[color=red, mark=x] coordinates {
        (63.8,100.0)
        (53.9,54.2)
        (48.5,33.8)
        (47.1,29.0)
        (46.1,25.4)        
      };
      \addlegendentry{\texttt{flow-sgd}}

      \addplot[color=blue, mark=o] coordinates {
        (41.9, 159.9)
        (50.0, 186.5)
        (51.1, 188.6)
        (65.5, 200.6)        
      };
      \addlegendentry{\texttt{clause}}

      \addplot[mark=none, black, samples=2, domain=0:100] {100.0};
      
    \end{axis}
  \end{tikzpicture}
  \caption{Comparison of the Tier 1 query coverage (the fraction of
    queries classified into Tier 1) in training data and test
    data. $y$-axis shows the coverage on the test data relative to the
    best result of \texttt{flow-sgd}.  For \texttt{flow-sgd} and
    \texttt{clause}, each point corresponds to a solution with a
    different regularization parameter $\lambda$. Units in the $x$
    axis are hidden for the confidentiality.}
  \label{fig:general}
\end{figure}
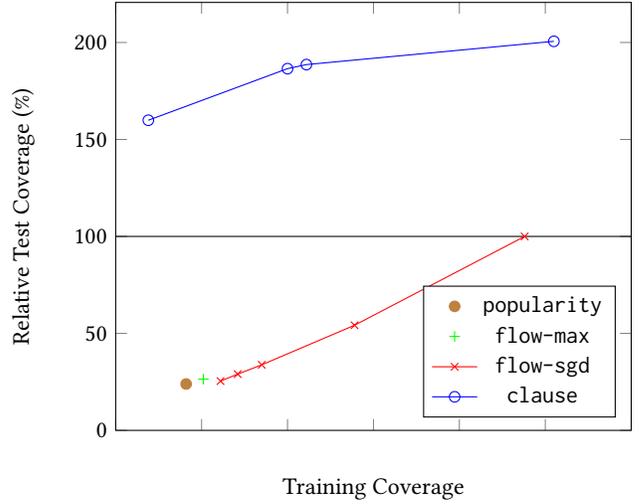

Figure~\ref{fig:general} simultaneously compares each method's fit to
the training data and their generalization to the future traffic.
Simple heuristics \texttt{popularity} and \texttt{flow-max} had a very
poor fit to the training data, although they were very competitive in
\citet{leung2010optimal}. This is probably because they only
considered top few documents per query, whereas we aim to cover the
entire match set $m(q)$; the larger the number of documents per query
$\abr{m(q)}$, the less likely it is for the heuristically chosen set
of documents $\Dcal_1$ to cover the entire match set.  Since
\texttt{flow-sgd} is a principled optimization algorithm, it is
successful at enforcing the correctness of query classification, and
achieves as high coverage in the training data as that of the
\texttt{clause} method we propose. However, the generalization
performance of \texttt{flow-sgd} was poorer than \texttt{clause}, and
exploring different values of the regularization parameter did not
help.  This demonstrates the effectiveness of our regularized
empirical risk minimization approach for optimizing the generalization
performance.

\section{Conclusion and Future Work}

We demonstrated that the clause selection-based tiering method is more
effective than the query selection-based method in terms of the
generalization performance to the future traffic. We proposed multiple
algorithms for optimizing the configuration of the tiered
architecture, which can also be used for general submodular
maximization problems with a submodular upper bound constraint. We
showed that these algorithms can scale to practical problems by
evaluating on real-world data.

In the future, it would be of interest to study how the proposed
method can be generalized for an arbitrary number of
tiers. Theoretical studies of the algorithms we proposed would be also
insightful. Indeed, it is quite surprising that very different
discrete optimization algorithms are converging to similar solutions;
there is probably a structure in the problem stronger than SCSK, which
helps algorithms to avoid being stuck in bad local minima.

\bibliographystyle{ACM-Reference-Format}
\bibliography{formbib}

\end{document}